\documentclass{article}

\usepackage[margin=1in]{geometry}
\usepackage{amsmath,amssymb,amsthm,mathtools}
\usepackage{enumitem}
\usepackage{hyperref}
\usepackage{xcolor}
\usepackage{algorithm}
\usepackage{algpseudocode}

\hypersetup{
  colorlinks=true,
  linkcolor=blue,
  citecolor=blue,
  urlcolor=blue
}

\newtheorem{theorem}{Theorem}
\newtheorem{lemma}{Lemma}

\newtheorem{remark}{Remark}
\newtheorem{definition}{Definition}

\newcommand{\OPT}{\operatorname{OPT}}
\newcommand{\R}{\mathbb{R}}

\title{Directed Reachability-Preserving Minimum Edge Cut: Approximation and Planar Hardness}
\author{Qi Duan, Carnegie Mellon University}
\date{}

\begin{document}

\maketitle

\begin{abstract}
We study a directed version of the three-terminal
reachability-preserving minimum edge cut problem.  Given a directed graph
$G=(V,A)$ with arc costs and terminals $s_1,s_2,t$, the one-way directed
RPMEC problem asks for a minimum-cost set of arcs whose deletion preserves
the reachability $s_1\leadsto s_2$ while destroying the reachability
$s_1\leadsto t$.  We first give a path--cut formulation in terms of a rooted
directed cut function.  Using a root-linear approximation for the associated
polymatroid, we obtain an $O(\sqrt r)$-approximation, where $r$ is the number
of relevant vertices with positive singleton cut value.  In particular this
gives an $O(\sqrt n)$-approximation in general directed graphs.  For acyclic
directed graphs, we give an additional singleton-length algorithm and obtain
an $O(\min\{\sqrt r,h\})$ guarantee, where $h$ is the maximum number of
relevant vertices on an $s_1$-$s_2$ path.  Finally, we prove that directed
planar RPMEC is NP-hard, even on acyclic planar digraphs with nonnegative
costs, by reducing from independent set on cubic planar graphs through a
finite-bimodal directed node-cut construction and a planar node-to-edge split.
\end{abstract}

\section{Introduction}

Connectivity cut problems usually ask for a minimum-cost deletion set that
separates prescribed terminal pairs.  In many applications, however, one also
needs to preserve some desired reachability while blocking an undesired one.
For example, in attack-graph mitigation, one may want to block paths to a
dangerous goal while preserving reachability among benign or required actions.
This motivates reachability-preserving minimum cut problems.

We consider the three-terminal directed edge-deletion problem.  Given
terminals $s_1,s_2,t$, the goal is to delete arcs so that $s_1$ can still
reach $s_2$, but $s_1$ can no longer reach $t$.  We call this the one-way
directed three-terminal RPMEC problem.  This is the directed analogue of the
undirected reachability-preserving minimum edge cut problem, but direction
creates new difficulties.  In particular, preserving reachability is not the
same as keeping two terminals in the same connected component.

The first part of the paper gives approximation results.  We show that the
problem can be written as

\[
\OPT=\min_{P:s_1\leadsto s_2} f(V(P)),
\]

where $f(X)$ is the minimum directed out-cut separating every vertex of $X$
from $t$.  This converts the problem into choosing a protected directed path
under a monotone submodular rooted cut objective.  Applying a root-linear
polymatroid approximation gives an $O(\sqrt r)$ approximation, where $r$ is
the number of relevant vertices with positive singleton cut value.  For DAGs,
a simpler singleton-length algorithm gives an $h$-approximation, where $h$ is
the maximum number of relevant vertices on an $s_1$-$s_2$ path.  Combining the
two gives $O(\min\{\sqrt r,h\})$.

The second part proves hardness for planar directed graphs.  Although
undirected planar variants can be more tractable, the directed planar problem
is NP-hard.  We first build an acyclic planar directed node-deletion instance
whose finite-cost deletable vertices are locally bimodal.  Then we use a
planar node-to-edge split that is valid precisely because of bimodality.  This
yields NP-hardness of directed planar RPMEC, even when the constructed
directed graph is acyclic.

\section{Problem Definitions}

A directed graph is written as $G=(V,A)$, where $V$ is the vertex set and
$A$ is the arc set.  For $U\subseteq V$, define the directed out-cut

\[
\delta^+(U)=\{(u,v)\in A:u\in U,\ v\notin U\}.
\]

For vertices $x,y\in V$, we write

\[
x\leadsto y
\]

if there exists a directed path from $x$ to $y$.

\begin{definition}[One-way directed RPMEC]
The one-way directed three-terminal reachability-preserving minimum edge cut
problem, abbreviated directed RPMEC, is defined as follows.  The input is a
directed graph $G=(V,A)$, nonnegative arc costs
$c:A\rightarrow \R_{\ge 0}$, and terminals $s_1,s_2,t\in V$.  A feasible
solution is an arc set $F\subseteq A$ such that, in $G-F$,

\[
s_1\leadsto s_2
\]

is preserved, while

\[
s_1\not\leadsto t
\]

holds.  The objective is to minimize $c(F)=\sum_{e\in F}c(e)$.
\end{definition}

We also use a node-deletion version in the hardness proof.

\begin{definition}[One-way directed RPMNC]
The one-way directed three-terminal reachability-preserving minimum node cut
problem, abbreviated directed RPMNC, has the same reachability requirements as
directed RPMEC, but the deletion set consists of nonterminal vertices.  Each
nonterminal vertex $v$ has a nonnegative weight $w(v)$, and terminals are
undeletable.
\end{definition}

If $s_1$ does not reach $s_2$ in the input graph, the instance is infeasible.
Throughout the approximation section we assume that at least one directed
$s_1$-$s_2$ path exists.

\section{Directed Three-Terminal RPMEC}
\label{sec:directed-rpmec}

Let

\[
R=V\setminus\{t\}.
\]

For every $X\subseteq R$, define the rooted directed cut function

\[
f(X)
=
\min\left\{
c(\delta^+(U)):
X\subseteq U\subseteq R
\right\}.
\]

Equivalently, $f(X)$ is the minimum cost of an arc set whose deletion
separates every vertex in $X$ from $t$.  This value can be computed by a
directed minimum cut: add a super-source $\sigma$ with infinite-capacity arcs
$\sigma\to x$ for all $x\in X$, and compute a minimum $\sigma$-$t$ cut
\cite{ahuja1993network}.

\subsection{Path--Cut Formulation}

\begin{lemma}[Directed path--cut formulation]
\label{lem:directed-path-cut}
For the one-way directed three-terminal RPMEC problem,

\[
\OPT
=
\min\left\{
f(V(P)):
P \text{ is a directed } s_1\text{-}s_2 \text{ path in } G-t
\right\}.
\]
\end{lemma}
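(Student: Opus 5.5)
We prove the two inequalities separately, using only the definition of $f$ and the fact that a set $U$ with $X\subseteq U\subseteq R$ makes $c(\delta^+(U))$ an upper bound for $f(X)$. Throughout we assume $s_1,s_2\neq t$; otherwise the instance is trivially infeasible, and both sides equal $+\infty$ under the convention $\min\emptyset=+\infty$.

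\emph{Upper bound ($\OPT\le$ right-hand side).} Fix a directed $s_1$-$s_2$ path $P$ in $G-t$, so $V(P)\subseteq R$. Let $U$ attain $f(V(P))$, i.e.\ $V(P)\subseteq U\subseteq R$ and $c(\delta^+(U))=f(V(P))$, and set $F=\delta^+(U)$. In $G-F$ no arc leaves $U$. Since $s_1\in U$ and $t\notin U$, this gives $s_1\not\leadsto t$. Every arc of $P$ has both endpoints in $U$, so it does not lie in $\delta^+(U)$; hence $P$ survives and $s_1\leadsto s_2$ is preserved. Thus $F$ is feasible with cost $f(V(P))$.

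\emph{Lower bound ($\OPT\ge$ right-hand side).} Let $F$ be an optimal feasible solution and let $U$ be the set of vertices reachable from $s_1$ in $G-F$. Then $t\notin U$, so $U\subseteq R$, and $s_2\in U$. Choose a directed $s_1$-$s_2$ path $P$ in $G-F$. Every vertex of $P$ is reachable from $s_1$ in $G-F$, so $V(P)\subseteq U$; in particular $t\notin V(P)$, and $P$ is a path in $G-t$. By the definition of $U$, every arc of $\delta^+(U)$ must belong to $F$, since otherwise its head would be reachable from $s_1$. Hence
\[
c(F)\ge c(\delta^+(U))\ge f(V(P)),
\]
where the last step uses that $U$ is a feasible set in the minimization defining $f(V(P))$. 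Nonnegativity of costs is what allows dropping the arcs of $F\setminus\delta^+(U)$.

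\emph{Main obstacle.} The only delicate point is the lower bound. The witnessing path must be taken inside the surviving graph, so that its vertex set lies in the reachable set $U$, and one must check that it avoids $t$. Both facts follow from taking $U$ to be the reachable set of $s_1$ in $G-F$. We may also take $P$ to be a simple path (shortcutting a walk only shrinks $V(P)$), and $f$ is monotone, so restricting to simple paths does not change the minimum.
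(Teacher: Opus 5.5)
Your proof is correct and matches the paper's argument: for the upper bound you delete an optimal out-cut $\delta^+(U)$ for $V(P)$, and for the lower bound you take $U$ to be the set reachable from $s_1$ in $G-F$ and use $\delta^+(U)\subseteq F$. Your additional remarks on degenerate terminals, nonnegative costs, and simple paths only make explicit points the paper leaves implicit.
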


\begin{proof}
Let $F$ be any feasible solution.  Since $s_1\leadsto s_2$ is preserved in
$G-F$, there exists a directed $s_1$-$s_2$ path $P$ in $G-F$.  Since
$s_1\not\leadsto t$ in $G-F$, this path cannot contain $t$.

Let $U$ be the set of vertices reachable from $s_1$ in $G-F$.  Then
$s_1,s_2\in U$, $t\notin U$, and $V(P)\subseteq U$.  Moreover, every arc of
$\delta^+(U)$ must belong to $F$; otherwise the head of such an arc would
also be reachable from $s_1$ in $G-F$, contradicting the definition of $U$.
Therefore

\[
c(F)\ge c(\delta^+(U))\ge f(V(P)).
\]

Taking the minimum over all feasible solutions $F$ gives

\[
\OPT
\ge
\min_{P:s_1\leadsto s_2} f(V(P)).
\]

Conversely, let $P$ be any directed $s_1$-$s_2$ path in $G-t$, and let
$U\subseteq R$ attain $f(V(P))$.  Since $V(P)\subseteq U$, every arc of $P$
has both endpoints in $U$.  Hence deleting $\delta^+(U)$ preserves the path
$P$.  Since $t\notin U$ and all arcs leaving $U$ are deleted, $s_1$ cannot
reach $t$ after the deletion.  Thus $\delta^+(U)$ is a feasible solution of
cost $f(V(P))$.  This proves the reverse inequality.
\end{proof}

The function $f$ is normalized, monotone, and submodular.  Normalization
follows from $f(\emptyset)=0$.  Monotonicity follows because the feasible
family for $Y$ is contained in the feasible family for $X$ whenever
$X\subseteq Y$.  For submodularity, let $U_X$ and $U_Y$ be optimal sets for
$X$ and $Y$, respectively.  The directed out-cut function is submodular:

\[
c(\delta^+(U_X))+c(\delta^+(U_Y))
\ge
c(\delta^+(U_X\cap U_Y))+c(\delta^+(U_X\cup U_Y)).
\]

Since $U_X\cap U_Y$ is feasible for $X\cap Y$ and $U_X\cup U_Y$ is feasible
for $X\cup Y$, we obtain

\[
f(X)+f(Y)\ge f(X\cap Y)+f(X\cup Y).
\]

Thus $f$ is a monotone submodular cut function
\cite{fujishige2005submodular,schrijver2003combinatorial}.

\subsection{General Directed Graphs}

For each $v\in R$, define the singleton directed cut value

\[
\lambda_v=f(\{v\}).
\]

Let

\[
R^+=\{v\in R:\lambda_v>0\},
\qquad
r=|R^+|.
\]

Vertices outside $R^+$ have zero cost to separate from $t$.  They do not
increase the rooted cut value: if $\lambda_v=0$, then adding $v$ to any
terminal set does not increase $f$.  Indeed, by subadditivity and monotonicity,

\[
f(X)\le f(X\cup\{v\})\le f(X)+f(\{v\})=f(X).
\]

Therefore

\[
f(X)=f(X\cap R^+)
\qquad
\text{for all }X\subseteq R.
\]

We use the following standard root-linear approximation theorem for
monotone submodular polymatroid rank functions.  It follows from the
ellipsoidal approximation framework for polymatroids and submodular functions
\cite{goemans2009submodular}.  This theorem is the only external
approximation black box used in this section.

\begin{lemma}[Root-linear approximation of directed rooted cuts]
\label{lem:directed-root-linear}
There are nonnegative vertex lengths $a_v$, $v\in R^+$, computable in
polynomial time up to arbitrary precision using a directed min-cut separation
oracle, such that for every $X\subseteq R^+$,

\[
\sqrt{\sum_{v\in X}a_v}
\le
f(X)
\le
C\sqrt r\sqrt{\sum_{v\in X}a_v},
\]

where $C>0$ is a universal constant.
\end{lemma}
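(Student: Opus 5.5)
We plan to derive the statement as a direct instance of the Goemans--Harvey--Iwata--Mirrokni ellipsoidal approximation \cite{goemans2009submodular}, specialized to the restriction of $f$ to the ground set $R^+$. First we record that $g:=f|_{2^{R^+}}$ is a polymatroid rank function on a ground set of size $r$. It is normalized, monotone and submodular, because these properties of $f$ (proved above) survive restriction. We may also assume $g(\{v\})=\lambda_v>0$ for every $v\in R^+$, which is how $R^+$ was defined. A value oracle for $g$ costs one directed minimum $\sigma$-$t$ cut per query, using the super-source construction from the definition of $f$. Hence every oracle call made by the framework runs in polynomial time, and this min-cut computation is the ``directed min-cut separation oracle'' mentioned in the statement.

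Next we invoke the GHIM theorem. For a polymatroid $g$ on an $r$-element ground set, it computes in polynomial time (deterministically, with a factor $O(\sqrt r\log r)$, or with the refined symmetric-body analysis achieving $O(\sqrt r)$ up to arbitrary precision) a vector $a\in\R_{\ge0}^{R^+}$. This vector satisfies
\[
\sqrt{a(X)}\le g(X)\le \alpha\sqrt{a(X)}\qquad\text{for all }X\subseteq R^+ ,
\]
with $\alpha=O(\sqrt r)$. The mechanism is the following. The polymatroid $P_g$ is approximated by an axis-aligned ellipsoid $E=\{x:\sum_v x_v^2/a_v\le 1\}$ satisfying $E/\alpha\subseteq P_g\subseteq E$. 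Membership and separation for $P_g$ reduce to minimizing $g(X)-x(X)$. For a cut function this minimization is itself a parametric min-cut: add arcs $\sigma\to v$ with capacity $x_v$. The values $\max_{x\in E}x(X)=\sqrt{a(X)}$ then sandwich $g(X)=\max_{x\in P_g}x(X)$. Because $P_g$ is down-monotone and symmetric under coordinate sign flips once reflected, an axis-aligned ellipsoid suffices, which is what makes the bound root-linear in $a$.

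Finally we rescale. Multiplying $a$ by a constant moves the factor entirely to the upper side, so the stated form holds with $C\sqrt r$, where $C$ is universal. Since $a_v\ge0$, nonnegativity is immediate. The identity $f(X)=f(X\cap R^+)$ shows that discarding zero-$\lambda$ vertices loses nothing later. Here, however, we only need $X\subseteq R^+$.

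The main obstacle is pinning down the exact factor. The deterministic GHIM bound for general submodular functions is $O(\sqrt r\log r)$, whereas $O(\sqrt r)$ requires the polymatroid-specific analysis (John ellipsoid of the symmetrized polymatroid, approximated via the ellipsoid-method algorithm) to precision $1+\epsilon$. We would cite that polymatroid result for the $\sqrt{r}$ factor, absorbing the arbitrary-precision loss into $C$. If only $\sqrt{r+1}$ or a factor of the form $\sqrt r(1+\epsilon)$ is available, it is still $O(\sqrt r)$.
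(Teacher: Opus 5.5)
You take the same route as the paper. You view $g=f|_{2^{R^+}}$ as a polymatroid and apply Goemans--Harvey--Iwata--Mirrokni as a black box, with every oracle call realized by a directed minimum cut. Your separation step (arcs $\sigma\to v$ of capacity $x_v$, $v\in R^+$) is exactly the paper's oracle. Because that minimum cut ranges over all source sides $W\subseteq R$, it also handles the subtlety the paper stresses, namely that a violated inequality may involve vertices of $R\setminus R^+$. (Minor: the sandwich $E/\alpha\subseteq\cdot\subseteq E$ must be stated for the symmetrized body $\{x:(|x_v|)_{v\in R^+}\in P_g\}$, not for $P_g$ itself, which lies in the nonnegative orthant.)

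The genuine gap is the factor. You flagged it, but then closed it with a result that the cited work does not contain.
\begin{itemize}
\item For an arbitrary monotone submodular function, which is exactly a polymatroid like $g$, the polynomial-time guarantee of Goemans et al.\ is $\alpha=O(\sqrt r\log r)$. That bound \emph{is} the polymatroid bound, not a weaker bound for some broader class.
\item Their $\sqrt{r+1}$ holds only for matroid rank functions. There the polytope's vertices are $0/1$ vectors, so the axis-aligned norm $\sum_v x_v^2/a_v$ is linear on vertices and can be maximized greedily. The weighted rooted-cut polymatroid has non-$0/1$ vertices, so this step breaks.
\item The factor $\sqrt r$ does hold existentially, via John's theorem for the unconditional symmetrized body. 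However, no polynomial-time method is known for computing a $(1+\epsilon)$-approximate John ellipsoid of a body given only by a separation oracle. The ellipsoid-method rounding of Gr\"otschel, Lov\'asz and Schrijver gives only about $\sqrt{r(r+1)}$ for centrally symmetric bodies. Goemans et al.\ use polymatroid structure to improve exactly this to $O(\sqrt r\log r)$.
\end{itemize}
As written, your argument therefore proves the two-sided inequality with factor $\sqrt r$ but not in polynomial time. Alternatively, it proves it in polynomial time with $O(\sqrt r\log r)$ in place of $C\sqrt r$.

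The paper's own justification is the same bare citation, so it does not supply the missing $\log r$ either. Establishing the statement as written would need an additional argument specific to directed rooted cut functions. Without one, the guarantee of Theorem~\ref{thm:general-directed-rpmec} should read $O(\sqrt r\log r)$.
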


For completeness, we spell out the separation oracle used for this
polymatroid.  Let

\[
P_f=
\left\{
y\in\R_{\ge 0}^{R^+}:
y(X)\le f(X)\ \text{for all }X\subseteq R^+
\right\}.
\]

Equivalently,

\[
P_f=
\left\{
y\in\R_{\ge 0}^{R^+}:
y(W\cap R^+)\le c(\delta^+(W))\ \text{for all }W\subseteq R
\right\}.
\]

The second description is important: the minimizing cut set $W$ may contain
zero-singleton vertices from $R\setminus R^+$, so it is not enough to test
only sets $W\subseteq R^+$.

For a candidate vector $y$, a violated inequality is found by maximizing

\[
y(W\cap R^+)-c(\delta^+(W))
\]

over all $W\subseteq R$.  This can be reduced to a directed minimum cut by
adding a source $\sigma$, adding arcs $\sigma\to v$ of capacity $y_v$ for
$v\in R^+$, and keeping the original arc capacities.  For a source-side set
$\{\sigma\}\cup W$, the cut capacity is

\[
y(R^+\setminus W)+c(\delta^+(W)).
\]

Thus minimizing this cut is equivalent to maximizing

\[
y(W\cap R^+)-c(\delta^+(W)).
\]

\begin{algorithm}[t]
\caption{General Directed 3-Terminal RPMEC Approximation}
\label{alg:directed-rpmec}
\begin{algorithmic}[1]
\Require Directed graph $G=(V,A)$, costs $c$, terminals $s_1,s_2,t$
\State Compute the rooted directed cut oracle $f$
\State Compute $R^+=\{v\in V\setminus\{t\}: f(\{v\})>0\}$
\State Compute root-linear lengths $a_v$ from Lemma~\ref{lem:directed-root-linear}
\State Find a shortest directed $s_1$-$s_2$ path $P$ in $G-t$ using vertex lengths $a_v$
\State Compute a minimum directed cut separating $V(P)$ from $t$, namely $f(V(P))$
\State \Return the corresponding arc cut
\end{algorithmic}
\end{algorithm}

\begin{theorem}[General directed approximation]
\label{thm:general-directed-rpmec}
The one-way directed three-terminal RPMEC problem admits a polynomial-time
$O(\sqrt r)$-approximation, where

\[
r=
|\{v\in V\setminus\{t\}: f(\{v\})>0\}|.
\]

In particular, since $r\le |V|-1$, it admits an $O(\sqrt{|V|})$-approximation.
\end{theorem}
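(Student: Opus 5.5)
The plan is to analyze Algorithm~\ref{alg:directed-rpmec} using Lemma~\ref{lem:directed-path-cut} and Lemma~\ref{lem:directed-root-linear}. First, the output is feasible. The path $P$ returned in Step~4 is a directed $s_1$-$s_2$ path in $G-t$, which exists by assumption. The converse half of the proof of Lemma~\ref{lem:directed-path-cut} shows that $\delta^+(U)$, for a minimizer $U$ of $f(V(P))$, preserves $P$ and cuts every path from $s_1$ to $t$. So the algorithm returns a feasible arc set of cost exactly $f(V(P))$.

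Second, the cost comparison. Put length $a_v$ on vertices $v\in R^+$ and length $0$ on vertices of $R\setminus R^+$. Shortest vertex-weighted paths with nonnegative lengths reduce to arc-weighted Dijkstra by splitting vertices, or by charging $a_v$ to arcs entering $v$ plus $a_{s_1}$. Hence $P$ minimizes $a(V(P)\cap R^+)$ over all $s_1$-$s_2$ paths in $G-t$. Let $P^*$ be a path attaining the minimum in Lemma~\ref{lem:directed-path-cut}, so that $\OPT=f(V(P^*))$. Using $f(X)=f(X\cap R^+)$ and both sides of Lemma~\ref{lem:directed-root-linear}, we get the chain
\[
f(V(P))=f(V(P)\cap R^+)\le C\sqrt r\sqrt{a(V(P)\cap R^+)}\le C\sqrt r\sqrt{a(V(P^*)\cap R^+)}\le C\sqrt r\, f(V(P^*)\cap R^+)=C\sqrt r\,\OPT.
\]
The middle inequality uses the optimality of $P$ and the monotonicity of the square root.

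Third, running time. Computing $R^+$ takes $|V|-1$ min-cut computations. The lengths $a_v$ are computable in polynomial time by Lemma~\ref{lem:directed-root-linear}, using the separation oracle described above. The last two steps are a shortest path and a min cut. If $r=0$, then $f\equiv 0$ on subsets of $R$, and the algorithm returns a zero-cost solution, which is optimal. Finally, $r\le|V|-1$ gives the $O(\sqrt{|V|})$ corollary.

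The main obstacle is that Lemma~\ref{lem:directed-root-linear} only yields the $a_v$ "up to arbitrary precision." This adds a factor of $(1+\epsilon)$ to $C$. I would handle it by choosing $\epsilon$ to be a fixed constant, for instance by computing the ellipsoidal approximation with precision $1/2$ relative accuracy. This is polynomial in the input size and preserves the $O(\sqrt r)$ bound.
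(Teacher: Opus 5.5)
Your proposal is correct and follows the paper's proof essentially step for step. Both use the same chain: take the optimal witness path $P^\star$ from Lemma~\ref{lem:directed-path-cut}, use $f(X)=f(X\cap R^+)$, apply both sides of Lemma~\ref{lem:directed-root-linear}, and invoke the $a$-optimality of $P$. Your extra points are details the paper leaves implicit: feasibility via the converse half of Lemma~\ref{lem:directed-path-cut}, zero lengths off $R^+$, the $r=0$ case, and absorbing the finite precision of the $a_v$ into the constant.
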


\begin{proof}
Let $P^\star$ be an optimal witness path in the path--cut formulation of
Lemma~\ref{lem:directed-path-cut}.  Thus

\[
f(V(P^\star))=\OPT.
\]

Let $P$ be the shortest directed $s_1$-$s_2$ path in $G-t$ with respect to
the vertex lengths $a_v$.  By Lemma~\ref{lem:directed-root-linear},

\[
\sqrt{\sum_{v\in V(P^\star)\cap R^+}a_v}
\le
f(V(P^\star)\cap R^+)
=
f(V(P^\star))
=
\OPT.
\]

Since $P$ is shortest with respect to $a_v$,

\[
\sum_{v\in V(P)\cap R^+}a_v
\le
\sum_{v\in V(P^\star)\cap R^+}a_v.
\]

Again by Lemma~\ref{lem:directed-root-linear},

\[
f(V(P))
=
f(V(P)\cap R^+)
\le
C\sqrt r\sqrt{\sum_{v\in V(P)\cap R^+}a_v}.
\]

Combining the inequalities gives

\[
f(V(P))
\le
C\sqrt r\cdot \OPT.
\]

The algorithm outputs a minimum directed cut separating $V(P)$ from $t$, whose
cost is exactly $f(V(P))$.  Therefore the approximation ratio is
$O(\sqrt r)$.
\end{proof}

\subsection{Acyclic Directed Graphs}

We now specialize to the case where $G$ is acyclic.  The general
$O(\sqrt r)$ approximation remains valid.  Acyclicity gives an additional
path-length-based guarantee.

Let

\[
h
=
\max\left\{
|V(P)\cap R^+|:
P \text{ is a directed } s_1\text{-}s_2 \text{ path in } G-t
\right\}.
\]

Since $G$ is acyclic, $h$ can be computed by dynamic programming over a
topological ordering.

Consider the following singleton-length algorithm.  For each $v\in R$, compute

\[
\lambda_v=f(\{v\}).
\]

Then find a shortest directed $s_1$-$s_2$ path $P$ in $G-t$ using vertex
lengths $\lambda_v$, and finally output a minimum directed cut separating
$V(P)$ from $t$.

\begin{lemma}[Singleton-length bound on DAGs]
\label{lem:dag-singleton-bound}
On acyclic directed graphs, the singleton-length algorithm is an
$h$-approximation.
\end{lemma}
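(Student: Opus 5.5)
The plan is to sandwich $f$ between the maximum and the sum of the singleton values. For any $X\subseteq R$, monotonicity gives $f(X)\ge \max_{v\in X}\lambda_v$. Subadditivity, which follows from submodularity and $f(\emptyset)=0$, gives $f(X)\le \sum_{v\in X}\lambda_v$. Both facts were established in the discussion following Lemma~\ref{lem:directed-path-cut}. Since $\lambda_v=0$ off $R^+$, the sum equals $\sum_{v\in X\cap R^+}\lambda_v$. We combine this with the path--cut formulation of Lemma~\ref{lem:directed-path-cut} and the optimality of the singleton-length shortest path.

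\textbf{Step 1: relate the optimal path's length to $\OPT$.} Let $P^\star$ be an optimal witness path in $G-t$, so $f(V(P^\star))=\OPT$. Each $v\in V(P^\star)$ satisfies $\lambda_v\le f(V(P^\star))=\OPT$ by monotonicity. At most $h$ vertices of $P^\star$ lie in $R^+$, by the definition of $h$, and all other vertices have $\lambda_v=0$. Therefore
\[
\sum_{v\in V(P^\star)}\lambda_v=\sum_{v\in V(P^\star)\cap R^+}\lambda_v\le h\cdot \OPT .
\]

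\textbf{Step 2: bound the returned cut.} Let $P$ be the shortest $s_1$-$s_2$ path in $G-t$ under the vertex lengths $\lambda_v$. Using subadditivity and then the shortest-path property,
\[
f(V(P))\le \sum_{v\in V(P)}\lambda_v\le \sum_{v\in V(P^\star)}\lambda_v\le h\cdot\OPT .
\]
The algorithm returns $\delta^+(U)$ for a minimizer $U$ of $f(V(P))$. By the converse direction of Lemma~\ref{lem:directed-path-cut} this is feasible, and its cost is exactly $f(V(P))$.

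\textbf{Role of acyclicity and polynomial time.} The inequality chain above does not use acyclicity; it needs only that $P^\star$ is a simple path, which holds by taking a minimal witness. Acyclicity is used for the algorithmic side. Shortest paths with nonnegative vertex lengths can be found by DP over a topological order, and the same DP computes $h$. The values $\lambda_v$ are $|R|$ min-cut computations. The only step that needs care is that the bound is stated with $h$ rather than the path's total vertex count. This is handled by noting that zero-$\lambda$ vertices contribute nothing to the sum. I do not expect a genuine obstacle.
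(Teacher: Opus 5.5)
Your proof is correct and follows the paper's argument essentially step for step. Monotonicity gives $\lambda_v\le\OPT$ on $P^\star$; the shortest-path property plus subadditivity give $f(V(P))\le\sum_{v\in V(P^\star)}\lambda_v$; and only the at most $h$ vertices of $V(P^\star)\cap R^+$ contribute nonzero terms. Your side remark that the inequality chain never uses acyclicity is also right, though the $\lambda$-shortest path itself can be found by Dijkstra in any digraph, so acyclicity really matters only for computing $h$.
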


\begin{proof}
Let $P^\star$ be an optimal witness path.  Then

\[
f(V(P^\star))=\OPT.
\]

For every $v\in V(P^\star)$, monotonicity gives

\[
\lambda_v=f(\{v\})
\le
f(V(P^\star))
=
\OPT.
\]

Since $P$ is shortest with respect to the singleton lengths $\lambda_v$,

\[
\sum_{v\in V(P)}\lambda_v
\le
\sum_{v\in V(P^\star)}\lambda_v.
\]

By subadditivity of $f$,

\[
f(V(P))
\le
\sum_{v\in V(P)} f(\{v\})
=
\sum_{v\in V(P)}\lambda_v.
\]

Therefore,

\[
f(V(P))
\le
\sum_{v\in V(P^\star)}\lambda_v
\le
|V(P^\star)\cap R^+|\cdot \OPT
\le
h\cdot \OPT.
\]

Thus the singleton-length algorithm is an $h$-approximation.
\end{proof}

\begin{theorem}[Acyclic directed approximation]
\label{thm:dag-directed-rpmec}
The one-way directed three-terminal RPMEC problem on acyclic directed graphs
admits a polynomial-time approximation ratio

\[
O\left(\min\{\sqrt r,h\}\right),
\]

where

\[
r=
|\{v\in V\setminus\{t\}: f(\{v\})>0\}|
\]

and

\[
h
=
\max\left\{
|V(P)\cap R^+|:
P \text{ is a directed } s_1\text{-}s_2 \text{ path in } G-t
\right\}.
\]
\end{theorem}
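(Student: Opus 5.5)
The plan is to run both algorithms already analyzed and return the cheaper of the two outputs. On an acyclic instance, first run Algorithm~\ref{alg:directed-rpmec}. By Theorem~\ref{thm:general-directed-rpmec} it returns a feasible arc cut $F_1$ with $c(F_1)=f(V(P_1))\le C\sqrt r\cdot\OPT$; that theorem did not use acyclicity. Second, run the singleton-length algorithm. Lemma~\ref{lem:dag-singleton-bound} says it returns a feasible cut $F_2$ with $c(F_2)\le h\cdot\OPT$. Both outputs are feasible: each has the form $\delta^+(U)$ with $V(P)\subseteq U\subseteq R$ for an $s_1$-$s_2$ path $P$ in $G-t$, which is exactly the converse direction of Lemma~\ref{lem:directed-path-cut}. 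Returning $F=\arg\min\{c(F_1),c(F_2)\}$ therefore gives
\[
c(F)\le \min\{C\sqrt r,\,h\}\cdot\OPT\le \max\{C,1\}\cdot\min\{\sqrt r,h\}\cdot\OPT ,
\]
which is the claimed $O(\min\{\sqrt r,h\})$ ratio.

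For the running time, each step is a polynomial number of directed min-cut computations plus shortest-path computations with nonnegative vertex lengths. The $n$ values $\lambda_v=f(\{v\})$ each need one min cut. The lengths $a_v$ are polynomial by Lemma~\ref{lem:directed-root-linear}, and the precision loss only affects the constant. The final cuts $f(V(P_1))$ and $f(V(P_2))$ each need one min cut. On a DAG the shortest paths can be computed by dynamic programming over a topological order. The algorithm does not need $h$ or $r$ themselves, but both are computable: $h$ by the same DAG dynamic program, and $r$ from the $\lambda_v$.

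The main subtlety is a degenerate case: $\OPT=0$. Then $\lambda_v\le\OPT=0$ for every $v$ on $P^\star$, so the singleton-length path has total length $0$. Subadditivity then gives $f(V(P_2))=0$, so the bound still holds, and the same happens when $h=0$. One should also confirm that the quantity $|V(P^\star)\cap R^+|$ in Lemma~\ref{lem:dag-singleton-bound} is bounded by $h$ as defined in the theorem. This holds because $P^\star$ is an $s_1$-$s_2$ path in $G-t$ by Lemma~\ref{lem:directed-path-cut}. No other step should pose difficulty, since the theorem just combines the two previously established guarantees.
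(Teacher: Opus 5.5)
Your proposal is correct and matches the paper's proof: run the root-linear algorithm of Theorem~\ref{thm:general-directed-rpmec} and the singleton-length algorithm of Lemma~\ref{lem:dag-singleton-bound}, then return the cheaper of the two cuts. Your extra checks on feasibility, polynomial running time, and the degenerate cases $\OPT=0$ and $h=0$ are sound, though the paper does not spell them out.
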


\begin{proof}
Run both algorithms.  Algorithm A is the root-linear algorithm of
Theorem~\ref{thm:general-directed-rpmec}, and Algorithm B is the
singleton-length algorithm of Lemma~\ref{lem:dag-singleton-bound}.  Algorithm
A returns a solution of cost at most $O(\sqrt r)\cdot \OPT$, while Algorithm
B returns a solution of cost at most $h\cdot \OPT$.  Taking the cheaper of
the two returned cuts gives approximation ratio

\[
O\left(\min\{\sqrt r,h\}\right).
\]
\end{proof}

This result shows that acyclicity can improve the approximation guarantee when
the relevant directed depth is small.  For example, if every relevant
$s_1$-$s_2$ path contains at most $O(\log |V|)$ vertices with positive
connectivity to $t$, then the ratio becomes $O(\log |V|)$.  If the relevant
depth is constant, the ratio becomes constant.

Acyclicity alone does not appear to remove the main worst-case difficulty.
A minimum-label path instance can be embedded into the acyclic directed RPMEC
objective.  Given a labeled acyclic graph, associate labels with vertices.
For each label $\ell$, add a label vertex $q_\ell$ and an arc $q_\ell\to t$
with cost equal to the cost of label $\ell$.  For every vertex $v$ with label
$\ell$, add a sufficiently large-cost arc $v\to q_\ell$.  Place all original
vertices first in a topological ordering, then the label vertices, and
finally $t$; the resulting graph is still acyclic.  For any chosen
$s_1$-$s_2$ path $P$, the cheapest way to separate $V(P)$ from $t$ is to pay
once for every distinct label appearing on $P$.  Thus the acyclic directed
RPMEC objective contains a minimum-label path type subproblem.  Minimum-label
path is known to have nontrivial approximation hardness
\cite{hassin2007labeled}.  Therefore, beating the $\sqrt r$ guarantee in the
worst case likely requires additional structure beyond acyclicity alone.

\section{NP-Hardness of Directed Planar RPMEC}
\label{sec:directed-planar-rpmec-hardness}

We now prove that the one-way directed planar edge version of RPMEC is
NP-hard.  The proof has two parts.  First, we construct an acyclic planar
directed RPMNC instance whose finite-cost deletable vertices are all bimodal.
Second, we transfer this node-deletion instance to an edge-deletion instance
by a planar split operation.

\subsection{A Planar Directed Node-Cut Source}

We reduce from \textsc{Independent Set} on cubic planar graphs.  Given a
cubic planar graph $H=(U,E)$ and an integer $K$, the question is whether
there exists an independent set $S\subseteq U$ with

\[
|S|\ge K.
\]

This problem is NP-complete \cite{madhavan1984independent}.  Let

\[
n=|U|.
\]

We construct a directed planar RPMNC instance with terminals

\[
s_1=a,\qquad s_2=b,\qquad t=z
\]

and budget

\[
B=n-K.
\]

The intended interpretation is that deleting the selector vertex $o_u$ means
that the vertex $u$ is not chosen in the independent set.

\subsubsection{The planar tour/lane embedding}

We need a planar way to arrange one serial certificate gadget for each edge
of $H$.  The following elementary topological lemma supplies such an
arrangement.

\begin{lemma}[Planar tour/lane embedding]
\label{lem:planar-tour-lane}
Let $H$ be a plane graph.  In polynomial size, one can construct a planar
layout consisting of:
\begin{enumerate}
    \item a simple directed serial chain from $a$ to $b$;
    \item at least one certificate location on the chain for every edge
    $e\in E(H)$;
    \item noncrossing lanes from each certificate location for
    $e=uv$ to small neighborhoods of the endpoint vertices $u$ and $v$;
    \item a directed planar drain tree from all vertex neighborhoods to a
    common sink $z$.
\end{enumerate}
Moreover, these structures can be drawn without crossings and with mutually
disjoint local lanes.
\end{lemma}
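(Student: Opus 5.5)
We build everything inside a single planar embedding of $H$, using a closed walk that visits every edge as a tour. Fix the given plane embedding of $H$; we may assume $H$ is connected, since otherwise we handle components side by side in the outer face. Let $D$ be the graph obtained from $H$ by doubling every edge, each copy drawn as a parallel curve very close to the original. Every vertex of $D$ has even degree, so $D$ has an Eulerian circuit. We choose a \emph{non-crossing} Euler circuit: at each vertex $u$, we pair consecutive half-edges in the rotation system so that the circuit only ``touches'' $u$ and never crosses itself there. Concretely, we take the boundary walk of a thin tubular neighbourhood of a spanning tree of $H$, and then add, for each non-tree edge, a detour that runs along one side of it and back. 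Equivalently, we walk around the boundary of a closed $\varepsilon$-neighbourhood of the whole drawing of $H$, face by face, joined along a spanning tree of the dual. This walk $W$ is a simple closed curve in the plane after a small perturbation. It has polynomial length, since each edge of $H$ is traversed at most twice and each traversal passes near both endpoints.

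Next we cut $W$ open at one point to obtain a simple curve, and subdivide it into a directed chain from $a$ to $b$; this gives item 1. Every edge $e=uv$ is traversed by $W$ along a segment running parallel to $e$. On that segment we choose one point as the certificate location for $e$; this gives item 2.

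The lanes come from the local geometry. The segment of $W$ along $e$ lies in a thin strip beside $e$, and that strip contains short arcs leading to small discs around $u$ and around $v$. We draw the lane for $e$ from the certificate location inside this strip, on the side of $W$ facing $e$, splitting into two branches that end in the discs $N(u)$ and $N(v)$. Strips belonging to distinct edges and distinct traversals are disjoint away from the vertex discs. Inside a disc $N(u)$, the at most $6$ lanes arrive in the cyclic order of the rotation at $u$, so they can be terminated at distinct points on the boundary of a smaller inner disc without crossing; this gives item 3. For item 4, we take a spanning tree $T$ of $H$ and draw the drain tree along curves parallel to the edges of $T$, inside the region bounded by $W$ on the side of $H$'s drawing. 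All edges are oriented toward a root vertex neighbourhood, and that root is connected to a new vertex $z$ placed in a face.

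The main obstacle is making sure the drain tree, the chain $W$, and the lanes do not cross. The chain $W$ separates the plane into an inner region containing the drawing of $H$ and an outer region. Lanes, however, go from $W$ inward to vertex discs, while drain curves join vertex discs to each other. I would resolve this by placing all of these inside the $\varepsilon$-neighbourhood, with $W$ on its boundary. Lanes occupy the layer of the strip nearest $W$, and drain curves run along the centrelines of tree edges, which are separated from $W$ by the lane layer. Within each vertex disc, the drain endpoint is placed at the centre, and the lane endpoints connect radially to it, so no crossing occurs. Finally, $z$ and the root connection are placed at a tree vertex's disc, with $z$ inside that disc. All sizes are polynomial in $|V(H)|$.
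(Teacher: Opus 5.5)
Your argument has a genuine gap: the lane to the far endpoint of a non-tree edge cannot be drawn, and no rerouting inside your disk fixes it.

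\textbf{Where it breaks.} $W$ is the boundary of $N(T)$ plus one detour per non-tree edge $e=uv$ (equivalently, the face boundaries of $N(H)$ spliced across the corridors of the non-tree edges). For $W$ to be a single simple closed curve, the detour along $e$ must turn back before reaching $N(v)$, because $\partial N(v)$ is already part of $W$. Two consequences follow.
\begin{enumerate}
\item The region bounded by $W$ does \emph{not} contain the drawing of $H$, since $W$ cuts every non-tree edge.
\item The part of the strip beside $e$ on the inner side of $W$ is cut off from $N(v)$ by $W$ itself. So the branch to $N(v)$ cannot be drawn ``inside this strip'' without crossing the chain.
\end{enumerate}

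\textbf{Rerouting inside the disk cannot work.} In your layout every lane and every vertex disc lies in the closed disk $\Delta$ bounded by $W$. The two branches of a certificate leave $W$ at essentially one point. For each vertex $w$, the set $S_w$ formed by $N(w)$ and the lanes ending there is connected, and these sets are pairwise disjoint. Hence their contact sets on $W$ cannot interleave. Contract each $S_w$ to a point and join cyclically consecutive contact points along $W$. This gives a plane graph with all vertices on one face, and it contains every edge realised by a certificate. So a layout of your type exists only when $H$ is outerplanar. Already $H=K_4$, which is cubic planar, rules it out; the drain tree plays no role in this obstruction.

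\textbf{Comparison with the paper.} The paper's proof uses essentially the same construction: the boundary tour of $N(H)$ cut along a spanning tree of the dual, with both lanes routed inside the corridor of $e$. It therefore has the same hole at the cut corridors and meets the same outerplanarity obstruction, so following it will not supply the missing idea. Some lanes must leave the chain on its other side; that is, the chain has to cross edges of $H$ rather than merely run beside them.

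\textbf{The layout the reduction needs.} In the reduction, the two-branch gadget forces the two leakage arcs of each certificate onto opposite sides of the chain. Take one certificate per edge of a triangle of $H$. The cycle through those three certificates passes from one side of the chain to the other at each of them, so it separates $a$ from $b$. For $K_4$ this would have to hold for all four facial triangles at once, but $a$ and $b$ lie in at most two faces. So for $K_4$ that layout does not exist, however it is constructed.
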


\begin{proof}
Take a regular neighborhood $N(H)$ of a fixed planar embedding of $H$: each
vertex of $H$ becomes a small disk and each edge becomes a thin rectangular
corridor joining the corresponding disks.  By taking the corridors
sufficiently wide, we may reserve several disjoint lanes inside every
corridor.

If $H$ is disconnected, add auxiliary connecting corridors in the outer face
only for the purpose of the layout.  These corridors carry no certificate
gadget and do not represent edges of $H$.  Thus the layout surface can be
assumed connected.

Cut $N(H)$ along a set of noncrossing arcs so that the resulting surface is a
disk.  Equivalently, one may cut along a spanning tree of the planar dual.
The boundary of the resulting disk gives a simple nonselfcrossing boundary
tour.  This tour visits every original edge corridor at least once.  Place the
directed serial chain from $a$ to $b$ along this boundary tour, and place a
certificate location whenever the tour visits the corridor corresponding to an
edge of $H$.

For a certificate location belonging to an edge $e=uv$, route one local lane
inside the corridor of $e$ to the disk of $u$, and another local lane inside
the same corridor to the disk of $v$.  Since distinct original corridors are
interior-disjoint, and since finitely many parallel lanes can be placed inside
each corridor, these routes can be chosen without crossings.

Finally, route a directed tree from all vertex disks to a common sink $z$
along a separate reserved lane.  Since the drain structure is a tree, it can
be embedded without self-crossing.  Placing the serial chain, the leakage
lanes, and the drain tree in disjoint lanes gives the desired planar layout.
\end{proof}

If the boundary tour visits an original edge $e$ more than once, we create
one certificate copy for each visit.  Repeating a certificate for the same
edge does not change the correctness of the reduction.

\subsubsection{The directed RPMNC construction}

For every vertex $u\in U$, create a selector vertex $o_u$ with node weight

\[
w(o_u)=1.
\]

For every certificate copy corresponding to an edge $e=uv$, create a directed
two-branch gadget $D_e^i$ with entry $p_e^i$ and exit $q_e^i$:

\[
p_e^i\to g_{u,e}^i\to q_e^i,
\]

and

\[
p_e^i\to g_{v,e}^i\to q_e^i.
\]

The vertices $g_{u,e}^i$ and $g_{v,e}^i$ are called branch gates.  They have
weight

\[
w(g_{u,e}^i)=w(g_{v,e}^i)=0.
\]

Connect all certificate copies in the order in which they appear on the
planar tour:

\[
a\to p_{e_1}^{i_1},
\qquad
q_{e_j}^{i_j}\to p_{e_{j+1}}^{i_{j+1}},
\qquad
q_{e_L}^{i_L}\to b.
\]

For every branch gate, add a leakage arc to the corresponding selector:

\[
g_{u,e}^i\to o_u,
\qquad
g_{v,e}^i\to o_v.
\]

For every selector $o_u$, add a directed drain path from $o_u$ to $z$ using
the planar drain tree from Lemma~\ref{lem:planar-tour-lane}.  Thus, whenever
$o_u$ is not deleted, reachability of $o_u$ implies reachability of $z$.

All auxiliary vertices other than selectors and branch gates receive weight

\[
M=B+1.
\]

Therefore no feasible node-deletion solution of cost at most $B$ can delete
an auxiliary vertex of weight $M$.

The construction is planar by Lemma~\ref{lem:planar-tour-lane}.  It is also
acyclic: the certificate gadgets form one directed chain from $a$ to $b$,
leakage arcs go from branch gates to selectors, and drain arcs go from
selectors toward the sink $z$.  There are no arcs from the selector/drain
part back to the serial chain.

\begin{lemma}[Independent set implies feasible RPMNC solution]
\label{lem:is-to-rpmnc}
If $H$ has an independent set of size at least $K$, then the constructed
directed planar RPMNC instance has a feasible node-deletion set of cost at
most $B=n-K$.
\end{lemma}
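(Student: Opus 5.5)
Take an independent set $S\subseteq U$ with $|S|\ge K$; shrinking it if needed, assume $|S|=K$. Let the deletion set be $D=\{o_u: u\notin S\}$. It contains exactly $n-K=B$ selectors, each of weight $1$, and no other vertices, so $w(D)=B$. Terminals are never deleted. What remains is to check the two reachability conditions in $G-D$.

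\emph{Preservation of $a\leadsto b$.} The serial chain only uses the vertices $a$, $p_e^i$, $q_e^i$, the branch gates, and $b$. None of these is a selector, so none lies in $D$. It is therefore enough to pick one branch in every gadget $D_e^i$, for instance $p_e^i\to g_{u,e}^i\to q_e^i$, and concatenate these branches with the connecting arcs $a\to p_{e_1}^{i_1}$, $q_{e_j}^{i_j}\to p_{e_{j+1}}^{i_{j+1}}$ and $q_{e_L}^{i_L}\to b$. This gives a directed $a$-$b$ path in $G-D$.

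\emph{Destruction of $a\leadsto z$.} This is the only step with real content. I would first list every arc leaving the serial chain. By construction, the only arcs from chain vertices to non-chain vertices are the leakage arcs $g_{u,e}^i\to o_u$. Everything $a$ can reach off the chain must therefore enter through some selector $o_u$, and only if $o_u\notin D$, which means $u\in S$. The remaining task is to rule out a route from such a surviving selector to $z$, and here the construction as stated creates a real difficulty. Each undeleted $o_u$ drains to $z$, and every branch gate, including $g_{u,e}^i$ with $u\in S$, is reachable from $a$. So $a\to\cdots\to g_{u,e}^i\to o_u\to\cdots\to z$ would survive. Independence is supposed to enter here: for an edge $e=uv$ at most one of $u,v$ is in $S$, so at least one of $o_u,o_v$ is deleted. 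Deleting a single selector, however, does not block leakage from the other gate.

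The obstacle is therefore to fix the gadget semantics so that a surviving selector only leaks when it is on the chosen protected route. The first thing I would try is the intended reading of the construction. The protected $a$-$b$ path in each gadget must go through the gate of an endpoint whose selector is deleted, which independence guarantees exists. The other gate, which leaks to an undeleted selector, must then be cut off; it has weight $0$, so it can be deleted for free. I would thus enlarge $D$ by adding, in every gadget $D_e^i$, the branch gate whose selector survives, at zero extra cost, and if both selectors are deleted, delete either gate. Then every remaining gate leaks only into a deleted selector, so $a$ reaches only chain vertices and deleted selectors. Since there are no arcs from the drain part back into the chain, and $z$ is reachable only through selectors and the drain tree, $z$ is unreachable. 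The preserved path now goes in each gadget through the undeleted gate, which exists because at least one endpoint of every edge lies outside $S$. The total cost remains $B$, since gates weigh $0$.
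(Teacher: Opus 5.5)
Your final argument is correct and is essentially the paper's proof. You delete $o_u$ for every $u\notin S$, plus in each gadget $D_e^i$ the zero-weight branch gate whose selector survives, so independence leaves a safe branch in every gadget and all remaining leakage goes only into deleted selectors. The difficulty you flagged is not a defect of the construction, only of your first selector-only deletion set, so in a write-up you should define the enlarged deletion set from the outset.
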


\begin{proof}
Let $S\subseteq U$ be an independent set with $|S|\ge K$.  Delete the
selectors corresponding to vertices not in $S$:

\[
X_{\mathrm{sel}}=\{o_u:u\notin S\}.
\]

The cost is

\[
w(X_{\mathrm{sel}})=|U\setminus S|\le n-K=B.
\]

Now consider any certificate copy for an edge $e=uv$.  Since $S$ is
independent, at least one of $u$ and $v$ is not in $S$.  Suppose
$u\notin S$.  Then $o_u$ has been deleted, so the branch

\[
p_e^i\to g_{u,e}^i\to q_e^i
\]

is safe: its leakage arc reaches only the deleted selector $o_u$.

For every certificate copy, keep one safe branch and delete the other branch
gate if necessary.  These additional gate deletions have zero cost.  Hence a
complete directed path from $a$ to $b$ remains through the serial chain.

Every reachable surviving branch gate leaks only to a deleted selector.  All
unsafe branch gates have been deleted.  Therefore no vertex reachable from
$a$ can reach the drain tree and then $z$.  Hence

\[
a\not\leadsto z
\]

after the deletion, while

\[
a\leadsto b
\]

is preserved.  Thus the RPMNC instance has a feasible solution of cost at
most $B$.
\end{proof}

\begin{lemma}[Feasible RPMNC solution implies independent set]
\label{lem:rpmnc-to-is}
If the constructed directed planar RPMNC instance has a feasible node-deletion
set of cost at most $B=n-K$, then $H$ has an independent set of size at least
$K$.
\end{lemma}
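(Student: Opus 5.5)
Let $X$ be a feasible node-deletion set with $w(X)\le B=n-K$. Since every auxiliary vertex has weight $M=B+1>w(X)$, $X$ contains only selectors and branch gates. Branch gates have weight zero, so $w(X)=|\{u: o_u\in X\}|$. Define
\[
S=\{u\in U: o_u\notin X\}.
\]
Then $|S|=n-|\{u:o_u\in X\}|\ge n-B=K$. It remains to show that $S$ is independent.

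Since $a\leadsto b$ survives in $G-X$, fix a surviving directed $a$-$b$ path $Q$. The structure of the chain pins down $Q$. Outside the gadgets, the only vertices are the chain vertices $a,p_e^i,q_e^i,b$, and all of them have weight $M$, so none is deleted. By acyclicity, no arc leads from the selector or drain part back to the chain. So once a path leaves the chain along a leakage arc, it can never return to $b$. Hence $Q$ is exactly the serial chain, passing through each gadget $D_e^i$ via one of its two branch gates. In particular, for every certificate copy of every edge $e=uv$, $Q$ uses some surviving gate, say $g_{u,e}^i\notin X$. That gate is reachable from $a$ in $G-X$.

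Now use $a\not\leadsto z$. The leakage arc $g_{u,e}^i\to o_u$ is present. If $o_u\notin X$, then $o_u$ is reachable from $a$. The drain path from $o_u$ to $z$ consists of auxiliary vertices of weight $M$, which are not in $X$, so $z$ would be reachable, a contradiction. Hence $o_u\in X$, i.e.\ $u\notin S$. Every edge $e\in E(H)$ has at least one certificate copy by Lemma~\ref{lem:planar-tour-lane}, so every edge has an endpoint outside $S$. Therefore $S$ is independent with $|S|\ge K$.

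The main obstacle is justifying that the drain path from each $o_u$ to $z$ consists only of undeletable (weight $M$) vertices and passes through no selector or gate. I would check this against the construction: the drain tree vertices are auxiliary, so they receive weight $M$. One subtlety is that the drain tree is shared: the path from $o_u$ may merge with other selectors' paths, but it passes only through tree vertices, not through other selectors. If the drain tree instead routes through other selectors, the argument still works: reaching any surviving selector leads to $z$ along the first undeleted continuation. Alternatively, one can observe that the path can only be blocked at $o_u$ itself.
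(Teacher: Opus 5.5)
Your proof is correct and follows essentially the same argument as the paper. Both discard the weight-$M$ vertices, set $S=\{u:o_u\notin X\}$, and observe that any surviving, reachable branch gate of a certificate copy must leak to a deleted selector (the paper phrases this as a contradiction assuming both endpoints of an edge lie in $S$). Your version additionally spells out why the surviving $a$-$b$ path must cross every gadget and why the drain path cannot be blocked. The closing hedge about the drain tree routing through other selectors is unnecessary, since by construction each selector has only incoming leakage arcs and a single outgoing drain arc into weight-$M$ drain vertices.
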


\begin{proof}
Let $X$ be a feasible RPMNC node-deletion set with

\[
w(X)\le B.
\]

No vertex of weight $M=B+1$ can belong to $X$.  Thus $X$ contains only
selectors and zero-cost branch gates.

Define

\[
S=\{u\in U:o_u\notin X\}.
\]

We claim that $S$ is independent.  Suppose not.  Then there is an edge

\[
e=uv
\]

with $u,v\in S$.  Consider any certificate copy $D_e^i$ of this edge.  Since
$a\leadsto b$ must survive, every serial certificate copy must contain at
least one surviving branch from its entry to its exit.  Therefore at least
one of

\[
g_{u,e}^i,\qquad g_{v,e}^i
\]

survives.  Suppose $g_{u,e}^i$ survives.  The surviving $a$-$b$ path reaches
$g_{u,e}^i$, and since $o_u\notin X$, the leakage path

\[
g_{u,e}^i\to o_u\leadsto z
\]

also survives.  Hence

\[
a\leadsto z,
\]

contradicting feasibility.  Therefore no edge of $H$ has both endpoints in
$S$, so $S$ is independent.

Finally,

\[
|U\setminus S|
=
|\{u\in U:o_u\in X\}|
\le
w(X)
\le
n-K.
\]

Thus

\[
|S|\ge K.
\]

So $H$ has an independent set of size at least $K$.
\end{proof}

\begin{theorem}[Finite-bimodal directed planar RPMNC hardness]
\label{thm:finite-bimodal-rpmnc-hard}
Directed planar RPMNC is NP-hard even when every finite-cost deletable vertex
is bimodal.  The hardness holds even for acyclic directed planar instances,
assuming nonnegative node weights are allowed.
\end{theorem}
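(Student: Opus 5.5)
The plan is to assemble the theorem from the reduction already built, via Lemmas~\ref{lem:is-to-rpmnc} and~\ref{lem:rpmnc-to-is}. First, the construction is polynomial. Lemma~\ref{lem:planar-tour-lane} yields a layout of polynomial size, so the number $L$ of certificate copies is polynomial in $n$. Each copy contributes four vertices, and there are $n$ selectors, the drain tree, and the chain arcs. Hence the directed RPMNC instance, its weights in $\{0,1,M\}$ with $M=B+1$, and the budget $B=n-K$ are all computable in polynomial time from $(H,K)$.

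Second, the two lemmas together give the exact equivalence: $H$ has an independent set of size at least $K$ if and only if the instance has a feasible node-deletion set of cost at most $B$. Since \textsc{Independent Set} on cubic planar graphs is NP-complete \cite{madhavan1984independent}, the decision version of directed planar RPMNC is NP-hard. Planarity comes from Lemma~\ref{lem:planar-tour-lane}. Acyclicity is argued from the layered order (serial chain, then selectors, then drain tree toward $z$): every arc goes forward in this order. All weights are nonnegative.

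The remaining and least routine step is the bimodality claim, because the term ``bimodal'' has not yet been formally defined in the text. I would fix the intended meaning locally: a vertex is bimodal if its incident arcs, in the cyclic rotation of the planar embedding, split into one contiguous block of in-arcs and one contiguous block of out-arcs. This is exactly what a later planar node-to-edge split needs: replace $v$ by $v^{\mathrm{in}}\to v^{\mathrm{out}}$ without crossings.

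The finite-cost deletable vertices are only the selectors $o_u$ and the branch gates $g_{u,e}^i$, since every other vertex has weight $M$ and is effectively undeletable. I would check them as follows.
\begin{itemize}
\item A branch gate has exactly one in-arc (from $p_e^i$) and two out-arcs (to $q_e^i$ and the leakage arc to $o_u$). Any rotation of three arcs with one in-arc is bimodal.
\item A selector $o_u$ has in-arcs only from gates, arriving through the lanes into the disk of $u$, and a single out-arc into the drain tree. Any vertex with a single out-arc is bimodal, regardless of how its in-arcs are ordered.
\end{itemize}
Thus every finite-cost deletable vertex is bimodal. I expect the main obstacle to be stating this definition consistently with how the subsequent split lemma uses it, and confirming that the drain routing in Lemma~\ref{lem:planar-tour-lane} gives each selector a single outgoing drain arc. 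If it did not, I would subdivide that drain arc with a weight-$M$ vertex, which changes neither the equivalence nor planarity.
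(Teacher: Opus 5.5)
\textbf{The gap: the constructed digraph is not planar.} Apart from one step, your outline is exactly the paper's proof, and those parts are fine. That covers polynomial size, the equivalence via Lemmas~\ref{lem:is-to-rpmnc} and~\ref{lem:rpmnc-to-is}, acyclicity from the layered order, and bimodality of gates (one in-arc) and selectors (one out-arc) in every rotation. The step that fails is the one you treat as routine: ``planarity comes from Lemma~\ref{lem:planar-tour-lane}''. The paper's proof makes the same appeal. The digraph actually built is not planar for any cubic $H$:
\begin{enumerate}
\item Contract the drain tree minus the selectors to a single vertex $Z$. Then $Z$ and all selectors form a connected set disjoint from the chain $C$ (the $a$--$b$ path of diamonds). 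So in any plane drawing they lie in one face $\Phi$ of $C$, and every gate lies on $\partial\Phi$.
\item $C$ is a path of $4$-cycles joined by bridges. A face touching every diamond is therefore traversed out along one branch of each diamond and back along the other. Hence the gates occur around $\partial\Phi$ in the cyclic order $x_1,\dots,x_L,y_L,\dots,y_1$, where $\{x_i,y_i\}$ are the two gates of the $i$th diamond.
\item $Z$, the selectors and the leakage arcs form a plane tree inside the disc $\Phi$ with its leaves on the boundary. So the gates of each selector must be cyclically contiguous. Otherwise a path leaf--$o_u$--leaf separates two leaves of other selectors, one of which can then no longer reach $Z$.
\item $x_1,y_1$ are cyclically adjacent and carry different selectors, and so are $x_L,y_L$. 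Hence each selector's gates lie all among the $x_i$ or all among the $y_i$.
\item The selector blocks containing $x_i$ and $y_i$ both advance monotonically in $i$. Listing the distinct pairs in order of $i$, each one contains a selector absent from all earlier pairs. So the edges of $H$ realized by diamonds form a forest.
\item But every edge of $H$ must be realized, and a cubic graph is not a forest.
\end{enumerate}

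So the layout lemma breaks at its last step. That the drain tree does not cross \emph{itself} is beside the point: it must cross the chain or the lanes to reach every vertex disc. Moreover, the two gates of a diamond are forced to leak to opposite sides of the chain. Your contingency plans (pinning down ``bimodal'', subdividing drain arcs) do not address this. What is missing is a genuinely planar arrangement. This would require a different design, e.g.\ several local copies of each selector whose deletion status is forced to agree by planar gadgets, or a planar source problem whose incidences can be laid out contiguously along the chain. As written, the argument establishes NP-hardness only for acyclic directed RPMNC, not for the planar version claimed in the statement.
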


\begin{proof}
The reduction above is polynomial, planar, and acyclic by construction.
Lemmas~\ref{lem:is-to-rpmnc} and~\ref{lem:rpmnc-to-is} prove correctness.

It remains only to verify the finite-bimodality property.  The finite-cost
vertices are exactly the selector vertices $o_u$ and the branch gates
$g_{u,e}^i$.  Each branch gate has one incoming arc from $p_e^i$ and two
outgoing arcs, one to $q_e^i$ and one to $o_u$.  Thus its local incidence
pattern is

\[
I,O,O,
\]

which is bimodal in every cyclic order because there is only one incoming
arc.

Each selector $o_u$ has only incoming leakage arcs and one outgoing drain arc
to the drain tree.  Thus its local incidence pattern is

\[
I,I,\ldots,I,O.
\]

Since there is only one outgoing arc, the incoming arcs form one consecutive
block in the cyclic order, and the outgoing block is a singleton.  Hence every
selector is also bimodal.

Therefore every finite-cost deletable vertex is bimodal.
\end{proof}

\subsection{From Finite-Bimodal RPMNC to Directed RPMEC}

We now transfer the node-deletion hardness to edge deletion.

\begin{lemma}[Finite-bimodal node-to-edge split]
\label{lem:finite-bimodal-split}
Let $G$ be a directed planar RPMNC instance with terminals $s_1,s_2,t$,
budget $B$, and node weights $w$.  Suppose every nonterminal vertex $v$ with

\[
w(v)\le B
\]

is bimodal in the given planar embedding.  Then one can construct in
polynomial time a directed planar RPMEC instance $G'$ with the same terminals
and budget $B$ such that $G$ has a feasible node-deletion solution of cost at
most $B$ if and only if $G'$ has a feasible arc-deletion solution of cost at
most $B$.
\end{lemma}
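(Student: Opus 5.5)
The construction is the standard node-splitting gadget, with costs chosen according to the budget. Every nonterminal vertex $v$ with $w(v)\le B$ is replaced by two vertices $v^{\mathrm{in}}$ and $v^{\mathrm{out}}$ joined by a single arc $v^{\mathrm{in}}\to v^{\mathrm{out}}$ of cost $w(v)$. Each original arc entering $v$ is redirected to enter $v^{\mathrm{in}}$, and each original arc leaving $v$ is redirected to leave $v^{\mathrm{out}}$. All original arcs receive cost $B+1$. Vertices with $w(v)>B$ and the terminals are left unsplit. Any arc of cost $B+1$ can never be used within budget $B$, and unsplit vertices are never deleted by a solution of cost at most $B$, so the reduction loses nothing there. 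The construction is clearly polynomial, and it preserves acyclicity: a directed cycle in $G'$ would contract to a directed cycle in $G$.

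Planarity is where bimodality is used, and this is the step I expect to be the main obstacle. Fix the given embedding and a small disk around a bimodal vertex $v$. In the cyclic order around $v$, the incoming arcs form one contiguous block and the outgoing arcs form the complementary contiguous block. Draw a chord of the disk separating the two blocks, put $v^{\mathrm{in}}$ on the side of the incoming block and $v^{\mathrm{out}}$ on the side of the outgoing block, and draw the arc $v^{\mathrm{in}}\to v^{\mathrm{out}}$ across the chord. The incoming arcs now end at $v^{\mathrm{in}}$ and the outgoing arcs start at $v^{\mathrm{out}}$ without crossing one another, since each block keeps its original cyclic order. The new arc crosses nothing, because the two blocks occupy disjoint arcs of the disk boundary. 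Every split happens inside its own disk, so the splits do not interact, and $G'$ is planar. If $v$ has only incoming or only outgoing arcs, the split is trivial. Without bimodality the incoming and outgoing arcs would interleave and this local redrawing would fail, which is exactly why the lemma assumes it.

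Correctness follows by showing that both directions preserve reachability.
\begin{itemize}
\item \emph{Node solution to arc solution.} Given a node solution $X$ with $w(X)\le B$, every $v\in X$ has $w(v)\le B$ and so was split. Take $F=\{v^{\mathrm{in}}\to v^{\mathrm{out}}:v\in X\}$, which has cost $w(X)$. Directed walks in $G-X$ correspond exactly to walks in $G'-F$: each visit to a surviving split vertex becomes the traversal $v^{\mathrm{in}}\to v^{\mathrm{out}}$. Hence $s_1\leadsto s_2$ survives in $G'-F$ and $s_1\not\leadsto t$ holds. For the second claim, suppose a path in $G'-F$ reaches $t$. Contracting its split arcs gives a path in $G-X$ reaching $t$, and this path avoids $X$ because the only route through a split vertex uses its internal arc.
\item \emph{Arc solution to node solution.} Given an arc solution $F$ with $c(F)\le B$, $F$ contains no arc of cost $B+1$, so every arc in $F$ is an internal arc of a split vertex. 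Let $X$ be the set of vertices whose internal arc lies in $F$; then $w(X)=c(F)$. The same path correspondence, taken in the reverse direction, shows that $X$ is feasible in $G$.
\end{itemize}
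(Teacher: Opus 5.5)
Your proposal is correct and follows essentially the same route as the paper: split each nonterminal vertex of weight at most $B$ into an in/out pair joined by a gate arc of cost $w(v)$, price all connector arcs at $B+1$, use bimodality to redraw each split locally inside its own disk, and transfer feasibility by lifting paths through surviving gates and contracting gate arcs. Your acyclicity remark is an extra; the paper proves that separately in the subsequent theorem.
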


\begin{proof}
Let

\[
D=\{v\in V(G)\setminus\{s_1,s_2,t\}:w(v)\le B\}.
\]

Only vertices in $D$ can be deleted by a feasible node solution.  For each
$v\in D$, replace $v$ by two vertices $v^{-}$ and $v^{+}$ and one gate arc

\[
g_v=(v^{-},v^{+})
\]

with arc cost

\[
c(g_v)=w(v).
\]

For every original arc $(u,v)$, replace its tail by $u^{+}$ if $u\in D$ and
by $u$ otherwise.  Replace its head by $v^{-}$ if $v\in D$ and by $v$
otherwise.  The resulting connector arc receives cost

\[
B+1.
\]

Terminals are not split.

The split preserves planarity.  Indeed, for each $v\in D$, the incoming arcs
of $v$ form one consecutive sector and the outgoing arcs form the other
sector.  Place $v^{-}$ in the incoming sector, place $v^{+}$ in the outgoing
sector, attach all incoming arcs to $v^{-}$, attach all outgoing arcs from
$v^{+}$, and draw the gate arc $v^{-}\to v^{+}$ inside the small disk that
formerly contained $v$.  Since incoming and outgoing arcs are not interleaved,
no crossing is introduced.  Vertices outside $D$ are not split.

Now suppose $X\subseteq D$ is a feasible node-deletion solution in $G$.
Delete

\[
F_X=\{g_v:v\in X\}
\]

in $G'$.  Then

\[
c(F_X)=w(X)\le B.
\]

Every directed path in $G-X$ lifts to a directed path in $G'-F_X$ by replacing
each surviving split vertex $v$ by the gate arc $v^{-}\to v^{+}$.  Therefore
$s_1\leadsto s_2$ is preserved in $G'-F_X$.  If $s_1\leadsto t$ existed in
$G'-F_X$, contracting every surviving gate arc would yield an $s_1$-$t$ path
in $G-X$, contradiction.  Hence $F_X$ is feasible for RPMEC.

Conversely, let $F$ be a feasible RPMEC arc-deletion solution in $G'$ with

\[
c(F)\le B.
\]

No connector arc can belong to $F$, because every connector arc has cost
$B+1$.  Thus $F$ consists only of split gate arcs.  Define

\[
X_F=\{v\in D:g_v\in F\}.
\]

Then

\[
w(X_F)=c(F)\le B.
\]

Since $s_1\leadsto s_2$ survives in $G'-F$, contracting every surviving split
gate gives an $s_1$-$s_2$ path in $G-X_F$.  If $s_1\leadsto t$ existed in
$G-X_F$, expanding every surviving split vertex into its gate arc would give
an $s_1$-$t$ path in $G'-F$, contradicting feasibility.  Thus $X_F$ is a
feasible RPMNC solution in $G$.
\end{proof}

\begin{theorem}[Directed planar RPMEC hardness]
\label{thm:directed-planar-rpmec-hard}
The one-way directed three-terminal RPMEC problem is NP-hard on planar
directed graphs.  The hardness holds even for acyclic planar directed graphs,
assuming nonnegative arc costs are allowed.
\end{theorem}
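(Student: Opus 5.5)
The plan is to compose the two reductions already established. Start from an instance $(H,K)$ of \textsc{Independent Set} on cubic planar graphs, which is NP-complete \cite{madhavan1984independent}. Apply the construction of the planar directed node-cut source to obtain the acyclic planar directed RPMNC instance with terminals $s_1=a$, $s_2=b$, $t=z$ and budget $B=n-K$. By Lemmas~\ref{lem:is-to-rpmnc} and~\ref{lem:rpmnc-to-is}, $H$ has an independent set of size at least $K$ if and only if this instance has a feasible node-deletion set of cost at most $B$.

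The second step is to check the hypothesis of Lemma~\ref{lem:finite-bimodal-split}. That lemma needs every nonterminal vertex of weight at most $B$ to be bimodal. In the construction, the vertices with weight at most $B$ are exactly the selectors (weight $1$) and the branch gates (weight $0$), since all other auxiliary vertices have weight $M=B+1$. Theorem~\ref{thm:finite-bimodal-rpmnc-hard} shows that all of these are bimodal. One degenerate case needs care: if $B<1$, that is $K=n$, the selectors have weight $1>B$. This only shrinks the set $D$ and does not affect the argument, or the trivial case can simply be decided directly. Applying the split lemma then gives, in polynomial time, a planar directed RPMEC instance $G'$ with the same terminals and budget. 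It has a feasible arc-deletion set of cost at most $B$ exactly when the RPMNC instance has a feasible node-deletion set of cost at most $B$. Arc costs are $w(v)\in\{0,1\}$ on gate arcs and $B+1$ on connector arcs, so they are nonnegative.

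The remaining step, and the only one not directly covered by the stated lemmas, is acyclicity of $G'$. A directed cycle in $G'$ would alternate between connector arcs and gate arcs $v^-\to v^+$. Contracting every gate arc maps it to a closed directed walk in $G$, because each connector arc is the image of an original arc and each gate arc collapses to a single vertex. The walk is nonempty because a cycle cannot consist only of gate arcs: each gate arc goes from $v^-$ to $v^+$, and $v^+$ has no incoming gate arc. A nonempty closed directed walk in $G$ would contain a directed cycle, which contradicts the acyclicity of $G$ established in the construction. This is where I would be most careful, but it is a short argument.

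Composing the two polynomial reductions therefore reduces cubic planar \textsc{Independent Set} to the decision version of acyclic planar directed RPMEC. This proves Theorem~\ref{thm:directed-planar-rpmec-hard}.
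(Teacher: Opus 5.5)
Relative to the paper's earlier results, your proposal is correct and is essentially the paper's own proof: compose Theorem~\ref{thm:finite-bimodal-rpmnc-hard} with Lemma~\ref{lem:finite-bimodal-split} and then check acyclicity, where your contraction argument is an equivalent variant of the paper's refinement of a topological order by $v^-<v^+$ (one slip: a cycle cannot consist of gate arcs alone because $v^+$ has no \emph{outgoing} gate arc, not because it has no incoming one). Be aware, though, that both proofs rely on the planarity claim of Theorem~\ref{thm:finite-bimodal-rpmnc-hard}, and that claim fails for the construction as given: after contracting the drain tree to one vertex, the two gates of each diamond lie on opposite sides of the chain, so the gate pairs appear in nested cyclic order around the contracted tree while each selector's gates must also be consecutive, and this leaves room for certificates of at most $n-1$ distinct edges of $H$ rather than all $3n/2$.
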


\begin{proof}
By Theorem~\ref{thm:finite-bimodal-rpmnc-hard}, directed planar RPMNC is
NP-hard even on acyclic instances in which every finite-cost deletable vertex
is bimodal.  Applying the finite-bimodal node-to-edge split of
Lemma~\ref{lem:finite-bimodal-split} gives an equivalent directed planar
RPMEC instance.

The split also preserves acyclicity.  Given a topological ordering of the
RPMNC instance, replace each split vertex $v$ by the ordered pair

\[
v^{-}<v^{+}
\]

at the position formerly occupied by $v$.  Every connector arc follows the
original topological order, and every split gate arc goes from $v^{-}$ to
$v^{+}$.  Hence the resulting RPMEC instance is acyclic.

Therefore directed planar RPMEC is NP-hard, even on acyclic planar directed
graphs.
\end{proof}

\begin{remark}[Strictly positive costs]
The proof above uses zero-cost branch gates.  If the problem definition
requires strictly positive rational deletion costs, let $N$ be the number of
branch gates and assign each branch gate cost

\[
\varepsilon=\frac{1}{N+1}.
\]

Replace the budget $B$ by

\[
B'=B+\frac{N}{N+1}.
\]

Deleting all branch gates costs less than $1$, while deleting one additional
selector costs $1$.  Therefore the selector-count logic of the reduction is
unchanged.

If integer positive costs are required, multiply all selector costs by
$N+1$, give every branch gate cost $1$, and replace the budget by

\[
B''=(N+1)B+N.
\]

Again, deleting all branch gates costs at most $N$, while one additional
selector costs $N+1$, so the reduction remains valid.
\end{remark}

\section{Conclusion}

We studied the one-way directed three-terminal RPMEC problem, in which one
must preserve $s_1\leadsto s_2$ while destroying $s_1\leadsto t$.  The main
algorithmic tool is a path--cut formulation: every feasible solution can be
viewed as choosing a surviving directed $s_1$-$s_2$ path and then cutting that
path from $t$ using a rooted directed cut function.  This formulation yields
an $O(\sqrt r)$ approximation in general directed graphs via root-linear
polymatroid approximation, where $r$ is the number of relevant vertices with
positive singleton cut value.  For acyclic directed graphs, a singleton-cut
path algorithm gives an additional $h$-approximation, leading to the combined
ratio $O(\min\{\sqrt r,h\})$.

On the hardness side, we proved that directed planar RPMEC is NP-hard, even
on acyclic planar digraphs with nonnegative costs.  The proof first constructs
a finite-bimodal directed planar node-deletion instance from independent set
on cubic planar graphs.  The local bimodality of all finite-cost vertices
then permits a planar node-to-edge split, transferring the hardness from
RPMNC to RPMEC.  This shows that directed planar RPMEC behaves differently
from the undirected planar edge version: directionality is sufficient to
recover NP-hardness even in acyclic planar instances.

Several questions remain open.  The approximation factor for general directed
graphs may be improvable on special graph classes beyond DAGs of small
relevant depth.  It would also be interesting to identify structural
conditions under which the directed planar problem becomes polynomial-time
solvable, and to determine whether the $O(\sqrt r)$ guarantee is tight for
directed RPMEC under standard complexity assumptions.

\bibliographystyle{plain}
\bibliography{main}

\end{document}